\documentclass[journal]{IEEEtran}

\usepackage{moreverb}
\usepackage{epsfig}
\usepackage{amsmath,amssymb,amsthm,mathrsfs,amsfonts,dsfont}
\usepackage{adjustbox,lipsum}
\usepackage[linesnumbered,ruled,vlined]{algorithm2e}
\usepackage{amsfonts}
\usepackage{epsfig}
\usepackage{amssymb}
\usepackage{amsmath}
\usepackage{amsthm}
\usepackage{multirow}
\usepackage{setspace}
\usepackage{rotating}
\usepackage{graphicx}
\usepackage{tabularx}
\usepackage{array}
\usepackage{anyfontsize}
\usepackage{color,soul}
\usepackage{graphicx} 
\usepackage{epstopdf}
\usepackage{blindtext}
\usepackage{amsmath}
\usepackage{amsthm,amssymb,amsmath,bm}
\usepackage{subfigure}
\usepackage{amsfonts}
\usepackage{epsfig}
\usepackage{amssymb}
\usepackage{amsmath}
\usepackage{cite}
\usepackage{graphicx}
\usepackage{fancyhdr}
\usepackage{caption}
\usepackage{tabularx}
\usepackage{tcolorbox}
\usepackage{cite}
\usepackage{setspace}

\usepackage[inline]{enumitem}

\usepackage{texdef2020}

\newtheorem{theorem}{Theorem}
\newtheorem{lemma}{Lemma}

\newtheorem{definition}{Definition}
\newtheorem{remark}{Remark}

\newcommand{\instant}{\tau}

\allowdisplaybreaks
\title{Age Dispersion and Higher-Order AoI
in Status Update Systems}

\author{ 
\IEEEauthorblockN{Mohammad~Moltafet, Roy~D.~Yates, Marian~Codreanu, and   Hamid~R.~Sadjadpour}

\thanks{
M. Moltafet is with the Department
of Electrical and Computer Engineering, Tennessee Tech University, TN, USA
 (e-mail: mmoltafet@tntech.edu).  R. D. Yates is with the Department of Electrical and Computer Engineering, Rutgers University, NJ, USA (e-mail: ryates@winlab.rutgers.edu). M. Codreanu is with the Department of Science and Technology, Link\"oping University, Link\"oping, Sweden (e-mail: marian.codreanu@liu.se). H. R. Sadjadpour is with the Department of Electrical and Computer Engineering, University of California, CA, USA (e-mail: hamid@ucsc.edu).
}
}
  
\begin{document}
\maketitle
\sloppy

\begin{abstract}
We introduce and characterize \emph{age dispersion} as a measure of temporal consistency in status update systems. Age dispersion is defined as the difference between the ages of the two most recently received updates, and its higher-order extension, the {$k$-th} order age dispersion,  captures the difference between the ages of the most recent update and the $(k+1)$-th most recent one. We analyze age dispersion in an M/G/1/1 queueing system. Furthermore, we establish connections between the {$k$-th} order age dispersion and the {$k$-th} order age of information (AoI), where the latter quantifies the age of the {$k$-th} most recently received update. 

\emph{Index Terms—} Age dispersion, $k$-th order age of information (AoI), M/G/1/1 queueing.
\end{abstract}


 \section{Introduction}
The Age of Information (AoI) was introduced in \cite{5984917} as a metric to quantify the freshness of information in status update systems. A status update packet contains the measured value of a monitored process (i.e., a sample), along with a timestamp indicating the sample generation time. When an update has timestamp $u$ at time $\instant$, then its age is $\instant-u$. In addition,  we say that one update is fresher than another if its age is less.  At any time instant $\instant$, if the freshest received update has timestamp $U(\instant)$, then AoI is defined as the random process 
$\Delta(\instant) = \instant - U(\instant)$ \cite{5984917}.

In recent years, AoI has been examined from diverse perspectives within status update systems, including queueing models analysis, e.g., in \cite{6195689,8469047, 9013935,9048914,9162681,Moltafet2020mgf,9705518,9119460, 8820073,9099557,8886357,10899900,11015332,moltafetisit2025}, and AoI-optimal status updating control procedures, e.g., in \cite{8943134,9181539,8648525,9540757,MolACM2023,zakeri2023minimizing,ssvtvt,10619121}.  Comprehensive literature reviews of recent works in AoI from different perspectives can be found in \cite{9380899,10542348,Abbas_AoI_Sur,10286022,10893697}.

While AoI effectively captures the freshness of information, it does not fully reflect the temporal consistency of updates \cite{Liu_concurrency}. In some applications, status updates need to arrive not only fresh but also closely spaced in time. To address this, we define a new metric, the \textit{age dispersion}
at the sink as the absolute difference between the ages of the two freshest received updates.

As a motivating example, consider the \textit{remote industrial support using cloud-rendered augmented reality (AR)} application. A technician at a factory or power plant faces a complex machine requiring inspection or repair and receives real-time assistance from a remote expert. The technician wears AR glasses equipped with various sensors, such as a camera, eye tracker, and microphone, which collect data including head pose, camera feed, eye gaze, and audio. This data is sent to a cloud server for processing and rendering overlays. The processed data is then transmitted to the remote expert, who views the technician’s environment in real time via a monitor or VR headset. Annotations and 3D instructions generated in the cloud are streamed back to the technician’s AR display, where they appear as visual overlays, such as arrows pointing to screws, animations showing parts being removed, and floating text instructions. In this example:
    \textit{freshness (low AoI)} is critical to avoid misaligned overlays; otherwise, arrows, labels, or instructions may appear in incorrect locations, leading to confusion or errors in action, but
    \textit{low age dispersion} is equally important to maintain realism and immersion, and to avoid motion sickness by ensuring visual stability.

In addition to the age dispersion, we define the {\em $k$-th order age dispersion} as the difference between the age of the freshest received update and that of the $(k+1)$-th freshest update.
Moreover, we define the {\em $k$-th order AoI} as the age of the $k$-th freshest update.  We further show that the average $k$-th order AoI can be represented as a function of the AoI and  $k$-th order age dispersion. 

We characterize the average $k$-th order AoI in a single-source M/G/1/1 queueing system under a probabilistically preemptive policy \cite{moltafetisit2025}. According to the probabilistically preemptive policy, if the server is idle, an arriving packet immediately enters service; if the server is busy, the arriving packet preempts the one in service with probability $\theta$, and is otherwise discarded.

\section{ Age Dispersion Definition}\label{System Model}
The source monitors a random process, and the sink is interested in receiving updates that are both \textit{timely} and \textit{temporally close}. We consider a status update system consisting of one source and one sink, as depicted in Fig.~\ref{Model}. In this system, packets are received at the sink in order of increasing time-stamps. Under this assumption, we present a formal definition of the age dispersion.

\begin{figure}
\centering
\includegraphics[width=.8\linewidth,trim = 0mm 0mm 0mm 0mm,clip]{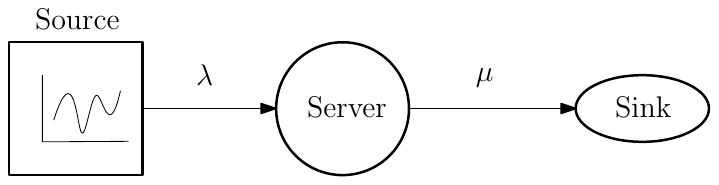}
\caption{The considered status update system.}
\label{Model}
\vspace{-5mm}
\end{figure}

\subsection{Age Dispersion}
With packets delivered in increasing time-stamp order, the age dispersion at the sink can be defined as the difference between the ages of the two most recently received updates.  In other words, the age dispersion is the difference between the AoI, i.e., age of the most recently received update, and the age of the previous update. 

\begin{figure}[t]
\centering
\includegraphics[width=1\linewidth,trim = 0mm 0mm 0mm 0mm,clip]{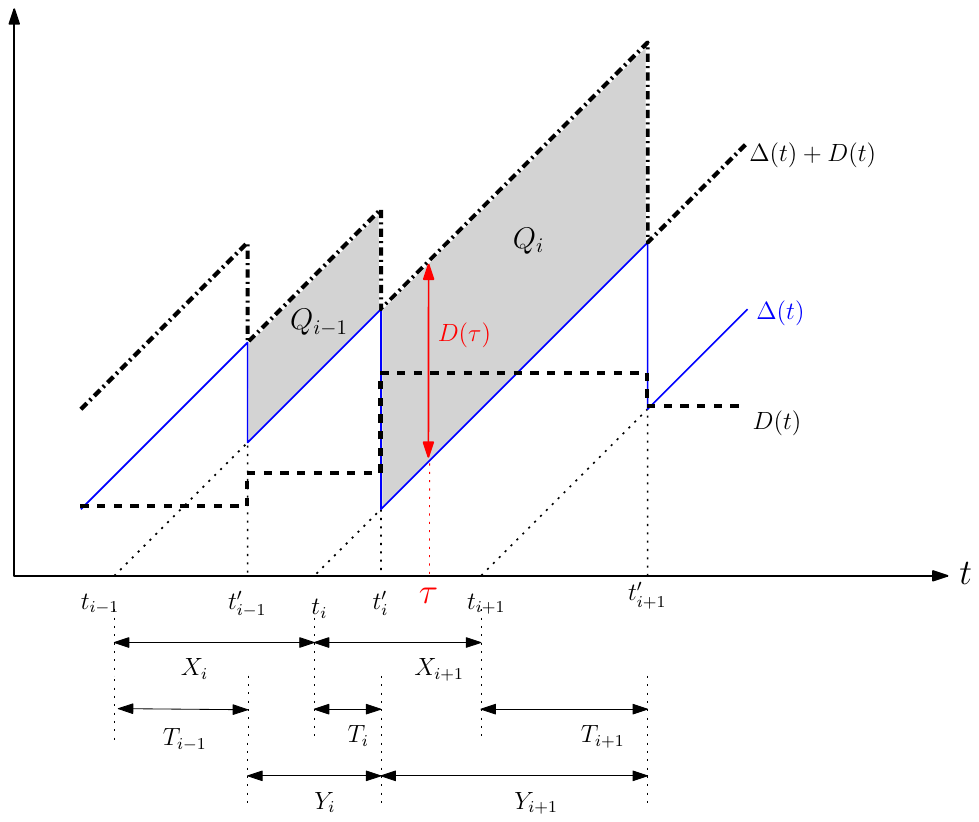}
\caption{An example of the AoI evolution. 
}
\label{Age dispersion}
\vspace{-5mm}
\end{figure}
Let $t'_i$ denote the time instant when the $i$th  update packet is delivered to the sink. In addition, let $t_i$ denote the time instant at which this same update is generated. 
At a time instant $\tau$, the index of the most recently received status update packet is
\begin{equation}\label{mnb00}
N(\tau)=\max\{i\colon t'_{i}\le \tau\}, 
\end{equation}
and the time stamp of the most recently received status update packet is
$t_{N(\tau)}$.
 The age dispersion at the time instant 
$\instant$ at the sink is defined as
\begin{align}
D(\instant)=t_{N(\instant)}-t_{N(\instant)-1},\qquad \instant \ge t'_2.
\end{align}


A sample path of the AoI exemplified in Fig.~\ref{Age dispersion} also illustrates the stepwise evolution of the age dispersion.
Averaging over the interval $(t'_2,t'_n)$ from the second update delivery through the $n$th update delivery at time  $t'_n$, the time average age dispersion at the sink
is defined as
\begin{equation}\label{oointr}
\Dbar_{n}=\dfrac{1}{t'_n-t'_2}\int_{t'_2}^{t'_n}D(\instant)\,\text{d}\instant.
\end{equation}
Let $Y_i=t'_i-t'_{i-1}$ denote the $i$-th interdeparture time, i.e., the time elapsed between the deliveries of updates $i-1$ and $i$. With $t_{i-1}$ and $t_i$ denoting the arrival/generation times of those same updates, we define  $X_i=t_{i}-t_{i-1}$ as the $i$-th interarrival time, i.e., the time elapsed between
the generation of {\em delivered} updates $i-1$ and $i$.
Partitioning the integral in \eqref{oointr} over the interdeparture intervals $(t'_{i},t'_{i+1})$ yields 
\begin{equation}\label{oointr-partition}
\Dbar_{n}=\dfrac{1}{t'_n-t'_2}\sum_{i=2}^{n-1} \int_{t'_i}^{t'_{i+1}}D(\instant)\,\text{d}\instant.
\end{equation}
As illustrated in Fig.~\ref{Age dispersion}, the integral in \eqref{oointr-partition} equals  the parallelogram area $Q_i=\int_{t'_{i}}^{t'_{i+1}}D(\instant)\,\text{d}\tau$. 
Moreover, for $\tau\in(t'_{i},t'_{i+1})$, $D(\tau)=X_i$ and thus
$Q_i=X_iY_{i+1}$. With the further observation that 
$t'_n-t'_2=\sum_{i=2}^{n-1}Y_{i+1}$, these facts yield
\begin{align}\label{oointr01}
\Dbar_{n}&=\dfrac{\textstyle\sum_{i=2}^{n-1}Q_{i}}{t'_n-t'_2} 
=\dfrac{\textstyle\sum_{i=2}^{n-1}X_iY_{i+1}}{\sum_{i=2}^{n-1}Y_{i+1}}.
\end{align}
%
%
The key idea here  is that the dispersion is $X_i$ over the $(i+1)$ interdeparture interval of length $Y_{i+1}$ and \eqref{oointr01} is averaging the dispersion over these departure intervals.
With the assumption that ${(X_i,Y_i)}$
is a stationary ergodic random process, the average age dispersion is given as
\begin{align}
\Dbar&=\lim_{n\to\infty}\Dbar_{n}
=\dfrac{\mathbb{E}[X_iY_{i+1}]}{\mathbb{E}[Y_{i+1}]}.
\label{oointr02}
\end{align}
Next, we derive the average age dispersion expression for an M/G/1/1 queueing model with probabilistic preemption. 



 \subsection{Age Dispersion: M/G/1/1 with Probabilistic Preemption}\label{mg11 analysis}
 When a packet arrives at an idle server, it is admitted into service. However,  according to the probabilistically
preemptive policy, if a packet arrives  at a busy server, the packet in service is replaced by the arriving packet with the fixed
probability $\theta$ \cite{moltafetisit2025}. 

Fig.~\ref{Age dispersion} shows an example of the AoI process for an M/G/1/1 system. With random variable $T_i$ denoting the system time of the $i$-th \textit{delivered} update in the system, it  can be seen in the figure that the $i$-th interarrival time can be calculated as 
\begin{align}
    X_{i}&=Y_{i}+T_{i-1}-T_{i}.\label{XYT-MG11}
\end{align}
Note that although packet arrivals are Poisson, the interarrival times $X_i$ between delivered packets are generally not exponentially distributed, since preemption selectively filters arriving packets in a manner that depends on both service times and subsequent arrivals.
By substituting $X_{i}$ into \eqref{oointr02}, the average age dispersion is rewritten as
\begin{align}\label{Age_d_general_2}
\Dbar&=\dfrac{\E{Y_iY_{i+1}}+\E{T_{i-1}Y_{i+1}}-\E{T_iY_{i+1}}}{\E{Y_{i+1}}}
    =\E{Y_i},
\end{align}
where the last equality in \eqref{Age_d_general_2} follows because in the M/G/1/1 queue, 
$Y_{i+1}$ is independent of 
$Y_{i}$, and the system times 
$T_{i-1}$ and $T_{i}$, respectively. 

Let $\lambda$ be the packet generation rate and let $S$ denote the exponential $(\mu)$
service time of a packet, with Laplace transform 
$L_S(\theta\lambda)=\E{e^{-\theta\lambda S}}$.
Using Theorem~1 in \cite{moltafetisit2025}, the average interdeparture time in an M/G/1/1 queue with probabilistic preemption is given as 
\begin{align}\label{age_disp_mg11}
   \E{Y}
   =\dfrac{ L_S(\lambda \theta)(\theta- 1)+1}{\lambda \theta L_S(\lambda \theta)},
\end{align}

\begin{remark}\label{Remark-1}
    Using \eqref{age_disp_mg11}, the age dispersions of an M/G/1/1 queue under  the preemptive policy with $\theta =1$  and the non-preemptive policy with  $\theta\rightarrow 0$, shown as $\Dbar_{\text{P}}$ and $\Dbar_{\text{NP}}$, respectively, are given as
\begin{align}\label{Dbar_P}
    &\Dbar_{\text{P}}=
    \dfrac{1}{\lambda L_S(\lambda)},~~~~~~~
    \Dbar_{\text{NP}}=
    \E{S}+\dfrac{1}{\lambda}. 
\end{align}
\end{remark}
A numerical comparison of $\Dbar_{\text{P}}$ and $\Dbar_{\text{NP}}$ is presented in Section~\ref{results}.
Next, we examine  higher-order 
age dispersion.




\section{$k$-th Order Age Dispersion}

Let $D^{(k)}(\instant)$ denote the \emph{$k$-th order age dispersion} at the time instant $\instant$, defined as the difference between the ages of the most recently received update and the $(k+1)$-th most recently received one. That is, $D^{(k)}(\instant)$ is given by
\begin{align}
D^{(k)}(\instant) = t_{N(\instant)} - t_{N(\instant)-k},\qquad \instant\ge t'_{k+1}.
\end{align}
Through the $n$th update delivery at time  $t'_n$, the time average $k$-th order age dispersion at the sink, denoted as $\Dbar^{(k)}_{n}$, is
\begin{align}\label{K-oointr}
\Dbar^{(k)}_{n}&=\dfrac{1}{t'_n-t'_{k+1}}\int_{t'_{k+1}}^{t'_n}D^{(k)}(\instant)\,\text{d}\instant.
\end{align}
With the definition of the $j$th {\em prior age dispersion} 
\begin{align}
    D_j(\instant) = 
    t_{N(\instant)-j+1} - t_{N(\instant)-j},
\end{align}
we observe that the $k$-th order age dispersion can be written as the telescoping sum
\begin{align}\label{Dk-telescoping}
D^{(k)}(\instant) = 
\sum_{j=1}^{k} D_j(\instant). 
\end{align}
It follows from (\ref{K-oointr}) and (\ref{Dk-telescoping}) that  the time average $k$-th order age dispersion  through the $n$th update delivery at the sink 
is
\begin{align}\label{K-oointr-v2}
\Dbar^{(k)}_{n}
&=\dfrac{1}{t'_n-t'_{k+1}}\int_{t'_{k+1}}^{t'_n} \sum_{j=1}^{k} D_j(\instant) 
\,\text{d}\instant.
\end{align}
By partitioning the integral in \eqref{K-oointr-v2} over the interdeparture intervals $(t'_{i},t'_{i+1})$, the  
time-average $k$-th order age dispersion is
\begin{align}
\Dbar^{(k)}_{n}
&=\dfrac{1}{t'_n-t'_{k+1}}\sum_{i=k+1}^{n-1} \sum_{j=1}^{k} \int_{t'_{i}}^{t'_{i+1}}D_j(\instant)
\,\text{d}\instant.\label{K-oointr-sum}
\end{align}
\begin{figure}[t]
\centering
\includegraphics[width=.97\linewidth,trim = 0mm 0mm 0mm 0mm,clip]{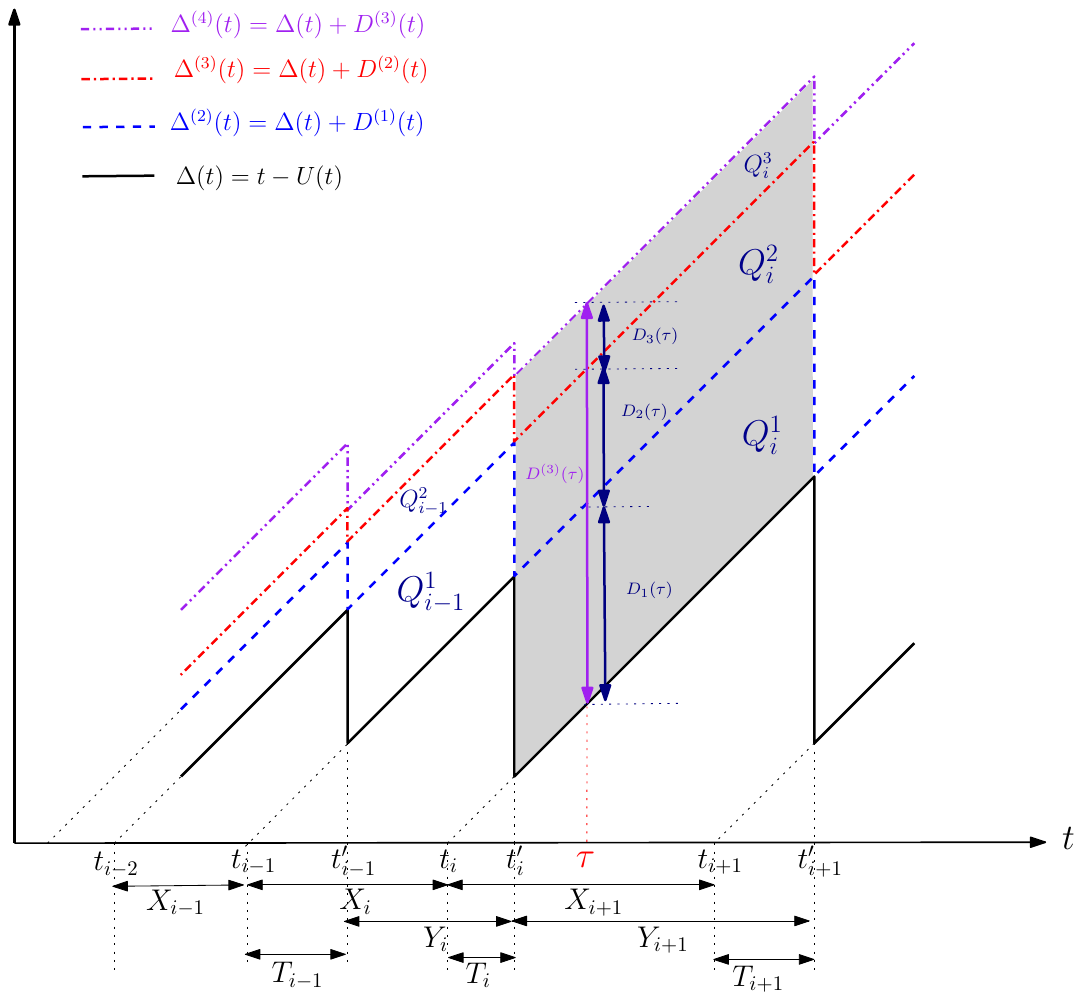}
\caption{An example of the higher-order AoI evolution.}
\label{Age_D_H_O}
\vspace{-3mm}
\end{figure}
As shown in Fig.~\ref{Age_D_H_O}, $D_j(\instant) = X_{i-j+1}$ for $\instant\in (t'_i,t'_{i+1})$  and the parallelogram area denoted $Q_i^j$ satisfies
\begin{align}\label{eqn:Qij-XY}
    Q_i^j= \int_{t'_{i}}^{t'_{i+1}} D_j(\instant)\,\text{d}\instant 
    =Y_{i+1}X_{i-j+1}.
\end{align}
With the observation that $t'_n-t'_{k+1}=\sum_{i=k+1}^{n-1} Y_{i+1}$,  \eqref{K-oointr-sum} and \eqref{eqn:Qij-XY}   imply
\begin{align}\label{K-oointr2}
\Dbar^{(k)}_n
&=\dfrac{\sum_{i=k+1}^{n-1}\sum_{j=1}^{k}Y_{i+1}X_{i-j+1}}{\sum_{i=k+1}^{n-1}Y_{i+1}}.
\end{align}
Just as we observed for the first order dispersion,  \eqref{K-oointr2} reflects that the  $k$th order dispersion is $\sum_{j=1}^{k}X_{i-j+1}$ over the interdeparture interval of length $Y_{i+1}$. 
Under the stationary ergodic assumption of the system, the average $k$-th order age dispersion is given as
\begin{align} 
\Dbar^{(k)}&=\lim_{n\to\infty}\Dbar^{(k)}_{n}
=\dfrac{\mathbb{E}[Y_{i+1}\sum_{j=1}^kX_{i-j+1}]}{\mathbb{E}[Y_{i+1}]}.
\label{k-oointr02}
\end{align}




\begin{definition}
    We define the $k$-th order AoI as the age of the $k$-th most recently received update.
\end{definition}
 Fig.~\ref{Age_D_H_O} illustrates  sample paths of higher-order AoI. In the following remark, we relate the average $k$-th order AoI to the average AoI $\Delta$ and the average $(k-1)$-th order age dispersion.
\begin{lemma}\label{lemma3}
   Let $\Delta^{(k)}$ denote the average $k$-th order AoI. 
   Then, we have 
   \begin{align}
       \Delta^{(k)} &= \Delta + \Dbar^{(k-1)}.
   \end{align}
\end{lemma}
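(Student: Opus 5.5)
We prove the identity pathwise, at the level of the instantaneous processes, and then pass to time averages. Let $\Delta^{(k)}(\instant)$ denote the instantaneous $k$-th order AoI, i.e., the age of the $k$-th most recently received update. Since packets are delivered in increasing time-stamp order, the $k$-th most recently received update at time $\instant$ is the one with index $N(\instant)-k+1$. Hence
\begin{align}
\Delta^{(k)}(\instant)=\instant-t_{N(\instant)-k+1},\qquad \instant\ge t'_{k}.
\end{align}
Adding and subtracting $t_{N(\instant)}$ gives
\begin{align}
\Delta^{(k)}(\instant)=\bigl(\instant-t_{N(\instant)}\bigr)+\bigl(t_{N(\instant)}-t_{N(\instant)-(k-1)}\bigr)=\Delta(\instant)+D^{(k-1)}(\instant).
\end{align}
This is exactly the definition of $D^{(k-1)}(\instant)$, with the convention $D^{(0)}(\instant)=0$, which makes the case $k=1$ trivial.

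Next, we integrate this identity over a common window $(t'_{k},t'_n)$ and divide by $t'_n-t'_{k}$. This yields a finite-$n$ identity among the three time averages. One could instead use the telescoping form \eqref{Dk-telescoping} and the interval decomposition \eqref{K-oointr2}, but that is unnecessary, because linearity of the integral already gives the result.

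The only point needing care is that the averages in the paper are defined over slightly different windows. For example, $\Dbar^{(k-1)}_n$ starts at $t'_{k}$, while the AoI average would naturally start at $t'_1$. Each of these windows differs from the common one by a fixed finite number of initial interdeparture intervals, and $t'_n\to\infty$. We therefore argue that, under the stationary ergodic assumption on $(X_i,Y_i)$ already invoked for \eqref{oointr02} and \eqref{k-oointr02}, the contribution of these finitely many initial intervals to each integral is almost surely finite. Divided by $t'_n-t'_k\to\infty$, it vanishes. So all three limits are unaffected by the choice of starting point.

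Taking $n\to\infty$ on both sides then gives $\Delta^{(k)}=\Delta+\Dbar^{(k-1)}$. This is the main (mild) obstacle: making the boundary-term argument rigorous. It amounts to noting that each limit in \eqref{k-oointr02} exists and is finite, so discarding the first few terms does not change it.
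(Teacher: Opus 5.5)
Your proof is correct and takes the same route the paper relies on. The paper states the lemma without a separate proof, treating it as immediate from the definitions and the sample path of Fig.~\ref{Age_D_H_O}; that is exactly your pathwise identity $\Delta^{(k)}(\tau)=(\tau-t_{N(\tau)})+(t_{N(\tau)}-t_{N(\tau)-k+1})=\Delta(\tau)+D^{(k-1)}(\tau)$ followed by time averaging, and your treatment of the differing window start points (finitely many a.s.\ finite initial terms, negligible since $t'_n\to\infty$) supplies a detail the paper leaves implicit.
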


\begin{remark}
   We are interested in studying higher-order AoI because, as shown in Lemma~\ref{lemma3}, a low average $k$-th order AoI implies a small average AoI, yielding fresh information, and a small average $(k-1)$-th order age dispersion, yielding temporally consistent data delivery.
\end{remark}
Next, the average $k$-th order AoI of an  M/G/1/1 system is characterized.


\begin{theorem}
\label{coro_2}
    The average $k$-th order AoI of the M/G/1/1 queueing system with probabilistic preemption is given as
\begin{align}\label{K_AoI_MG}
    \Delta^{(k)}= \Delta+(k-1)\dfrac{ L_S(\lambda \theta)(\theta- 1)+1}{\lambda \theta L_S(\lambda \theta)},
\end{align}
where the AoI $\Delta$ is given as 
\begin{align}
&\Delta\!=\!\dfrac{\!L_{S}(\lambda\theta)((\theta^2\!-\!\theta)(L_{S}(\lambda\theta)\!+\!\!\lambda L'_{S} (\lambda\theta))\!+\!\theta\!-\!1)\!+\!1}{\lambda(\theta^2-\theta) L_{S}(\lambda\theta)^2+\lambda\theta L_{S}(\lambda\theta)}
\end{align}
with $L'_{S} (\lambda\theta)=\mathbb{E}[Se^{-\lambda\theta S}]$. 
\end{theorem}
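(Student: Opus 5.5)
The plan is to combine Lemma~\ref{lemma3} with a computation of $\Dbar^{(k-1)}$ for the M/G/1/1 queue, and then import the average AoI from \cite{moltafetisit2025}. By Lemma~\ref{lemma3}, $\Delta^{(k)}=\Delta+\Dbar^{(k-1)}$. It therefore suffices to show $\Dbar^{(k-1)}=(k-1)\E{Y}$ and then substitute \eqref{age_disp_mg11} for $\E{Y}$. The stated expression for $\Delta$ is the average AoI of the M/G/1/1 queue with probabilistic preemption, which we take directly from \cite{moltafetisit2025}.

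To evaluate $\Dbar^{(k-1)}$, start from \eqref{k-oointr02} with $k$ replaced by $k-1$:
\begin{align*}
\Dbar^{(k-1)}=\sum_{j=1}^{k-1}\dfrac{\E{Y_{i+1}X_{i-j+1}}}{\E{Y_{i+1}}}.
\end{align*}
Substitute \eqref{XYT-MG11} for each term, $X_{i-j+1}=Y_{i-j+1}+T_{i-j}-T_{i-j+1}$. The key structural fact is the one already used in \eqref{Age_d_general_2}. In the M/G/1/1 queue with Poisson arrivals and probabilistic preemption, the system empties at each departure. By memorylessness of the arrivals, the next interdeparture time $Y_{i+1}$ (idle period plus a busy period ended by a successful delivery) is independent of the entire past, i.e., of all $Y_m$ and $T_m$ with $m\le i$.

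One point needs care: the past system times $T_{i-j}$ and $T_{i-j+1}$ must lie in the past. Since $j\ge1$, we have $i-j+1\le i$, so both $T_{i-j}$ and $T_{i-j+1}$ belong to updates delivered at or before $t'_i$, and $Y_{i-j+1}$ is likewise a past interdeparture time. Hence each expectation factorizes:
\begin{align*}
\E{Y_{i+1}X_{i-j+1}}=\E{Y_{i+1}}\bigl(\E{Y}+\E{T}-\E{T}\bigr)=\E{Y_{i+1}}\E{Y},
\end{align*}
using stationarity, which gives $\E{T_{i-j}}=\E{T_{i-j+1}}$. Each of the $k-1$ terms therefore contributes $\E{Y}$, so $\Dbar^{(k-1)}=(k-1)\E{Y}$. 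Inserting \eqref{age_disp_mg11} gives \eqref{K_AoI_MG}.

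The main obstacle is justifying the independence of $Y_{i+1}$ from $(Y_m,T_m)_{m\le i}$. In particular, $T_i$ depends on how the busy period ending at $t'_i$ unfolded, and one must check that this does not leak into the future. I would argue it through the regenerative structure. At $t'_i$ the server is empty, and the future evolution depends only on the arrival process after $t'_i$ and fresh service times, both independent of everything before $t'_i$ by the Poisson and i.i.d.\ assumptions.

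A second point is checking that the $\Delta$ formula matches the cited theorem. If it did not, I would rederive it via $\Delta=\E{X_iY_{i+1}+Y_{i+1}^2/2+T_iY_{i+1}}/\E{Y}$, which simplifies to $\E{Y}+\E{T}+\E{Y^2}/(2\E{Y})$-type terms computed from Laplace transforms. Here the derivative term $L'_S$ comes from $\E{T}$.
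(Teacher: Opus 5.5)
Your main argument is correct and is essentially the paper's proof. Lemma~\ref{lemma3} reduces the claim to $\bar{D}^{(k-1)}$. Independence of $Y_{i+1}$ from the past gives the factorization, \eqref{XYT-MG11} with stationarity gives $\mathbb{E}[X]=\mathbb{E}[Y]$, and $\Delta$ is imported from \cite{moltafetisit2025}. The only differences are cosmetic: you substitute \eqref{XYT-MG11} before factoring rather than after, and you justify the independence (regeneration at the stopping time $t'_i$) more carefully than the paper's ``no waiting queue'' remark.

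Your fallback formula for $\Delta$ is wrong, though.
\begin{itemize}
\item The expression $\mathbb{E}[X_iY_{i+1}+Y_{i+1}^2/2+T_iY_{i+1}]/\mathbb{E}[Y]$ is the time average of $\Delta(\tau)+D(\tau)$, i.e., the second-order AoI $\Delta^{(2)}$, not $\Delta$.
\item On $(t'_i,t'_{i+1})$ the AoI starts at $T_i$ and grows linearly, so its area is $T_iY_{i+1}+Y_{i+1}^2/2$. This gives $\Delta=\mathbb{E}[T]+\mathbb{E}[Y^2]/(2\mathbb{E}[Y])$, which does reproduce the stated formula.
\item Your version overshoots by $\mathbb{E}[Y]$. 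Combined with Lemma~\ref{lemma3}, it would put $k\,\mathbb{E}[Y]$ in place of $(k-1)\mathbb{E}[Y]$ in the final expression.
\item $L'_S$ enters not only through $\mathbb{E}[T]=L'_S(\lambda\theta)/L_S(\lambda\theta)$ but also through the busy-period second moment inside $\mathbb{E}[Y^2]$.
\end{itemize}
Since your main line imports $\Delta$ from the reference, none of this affects the proof itself.
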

\begin{proof}
According to Lemma~\ref{lemma3}, to characterize the average $k$-th order AoI, it suffices to derive the average AoI and the average $(k-1)$-th order age dispersion. The average AoI of the M/G/1/1 queueing system was characterized in~\cite[Corollary~1]{moltafetisit2025}. The average $(k-1)$-th order age dispersion of the M/G/1/1 queue is given as
\begin{align}
\Dbar^{(k-1)}&\stackrel{(a)}=\dfrac{\mathbb{E}[Y_{i+1}\sum_{j=1}^{k-1}X_{i-j+1}]}{\mathbb{E}[Y_{i+1}]}\nn&
\stackrel{(b)}=\dfrac{\mathbb{E}[Y_{i+1}]\mathbb{E}[\sum_{j=1}^{k-1}X_{i-j+1}]}{\mathbb{E}[Y_{i+1}]}\nn&
\stackrel{(c)}=(k-1)\mathbb{E}[Y].\label{hiorderdispersion}
\end{align}
In \eqref{hiorderdispersion}, $(a)$ follows from \eqref{k-oointr02}, $(b)$ follows from the fact that in an M/G/1/1 queueing system there is no waiting queue, and hence $Y_{i+1}$ and $X_{i-j+1}$ are independent, and $(c)$ is a consequence of \eqref{XYT-MG11}, which implies $\mathbb{E}[X_{i-j+1}] = \mathbb{E}[Y_{i-j+1}]$ in a stationary M/G/1/1 queue. The claim then follows from \eqref{age_disp_mg11}.
\end{proof}

\section{Numerical Results}\label{results}

\begin{figure}
\centering

\subfigure[$(\kappa, \beta) = (0.5, 0.5)$.]{
\includegraphics[width=0.48\textwidth]{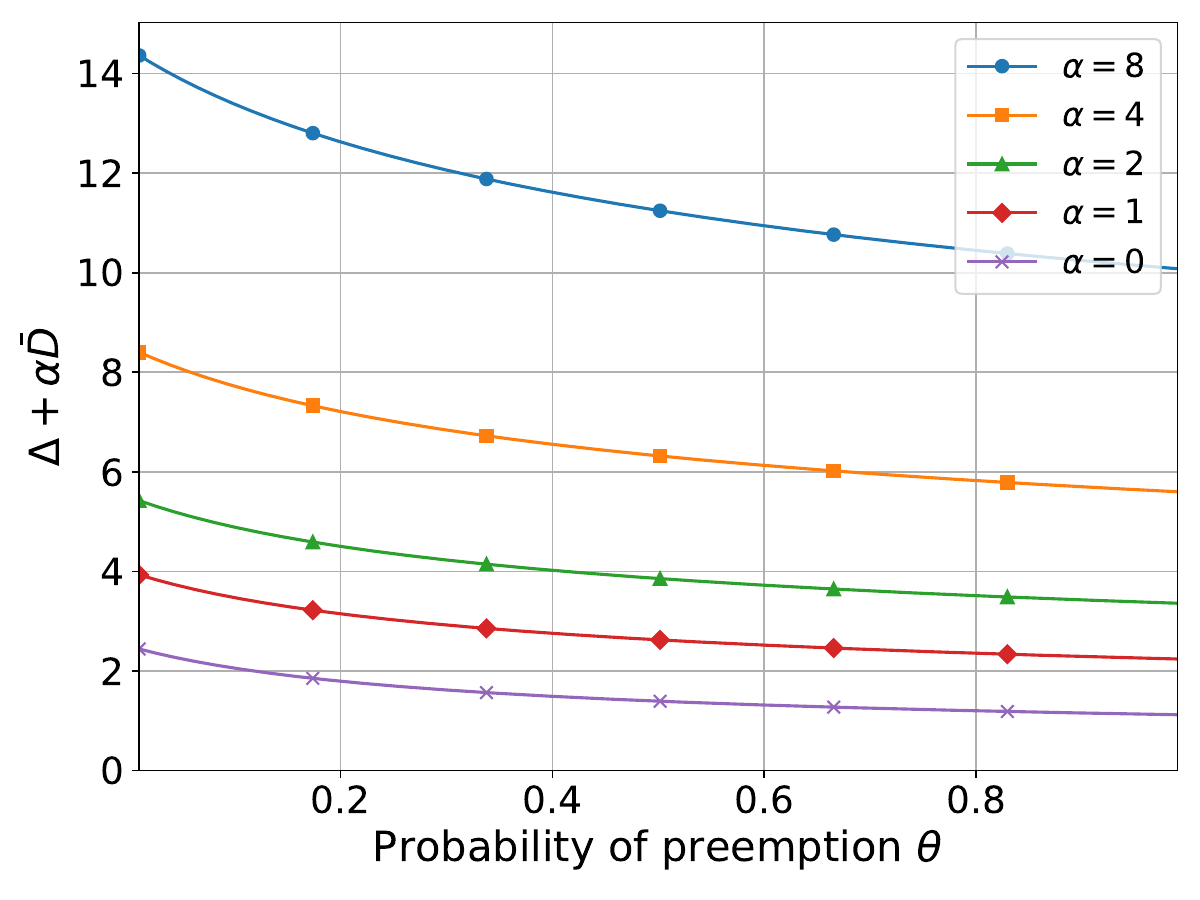}
\label{MG_Gamma_p5}
}
\subfigure[$(\kappa, \beta) = (1, 1)$.]{
\includegraphics[width=0.48\textwidth]{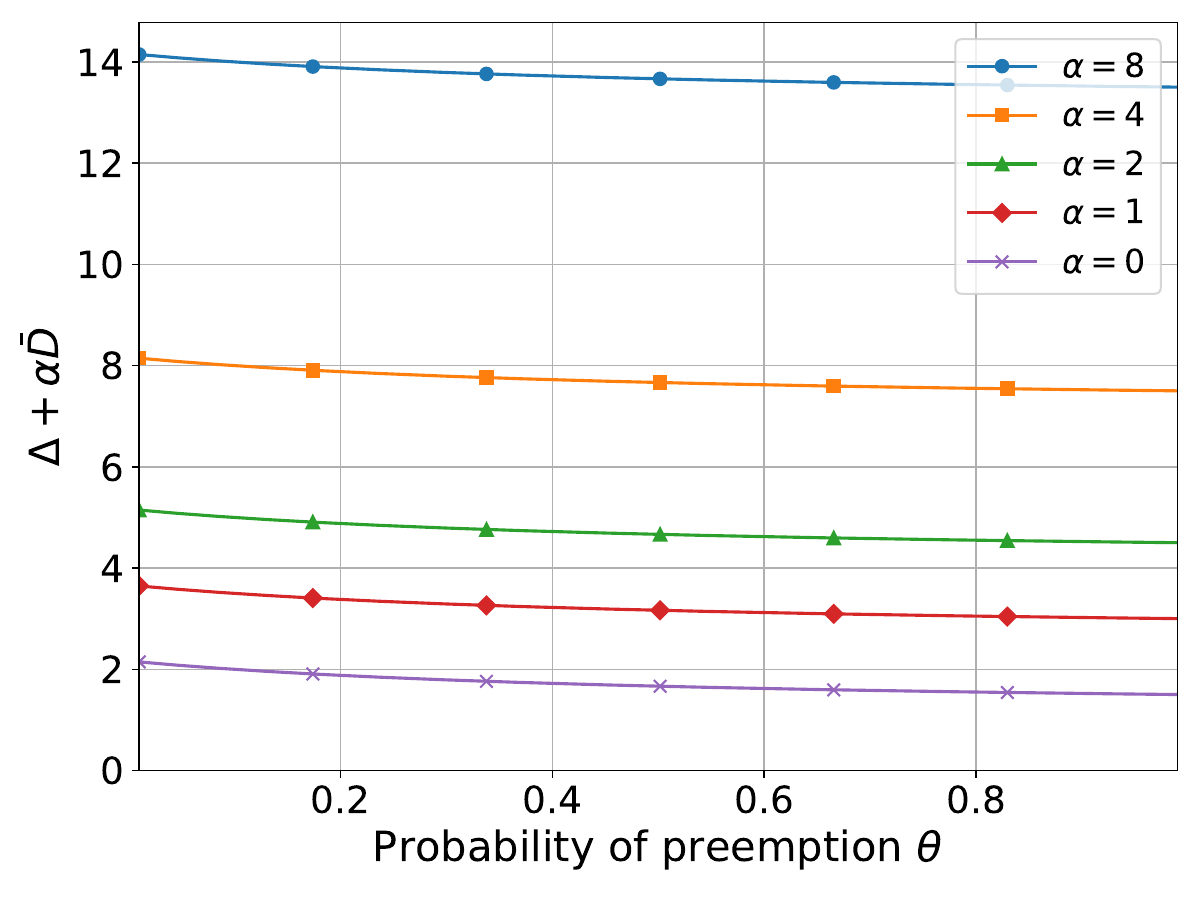}
\label{MG_Gamma_1}
}
\subfigure[$(\kappa, \beta) = (2, 2)$.]
{
\includegraphics[width=0.48\textwidth]{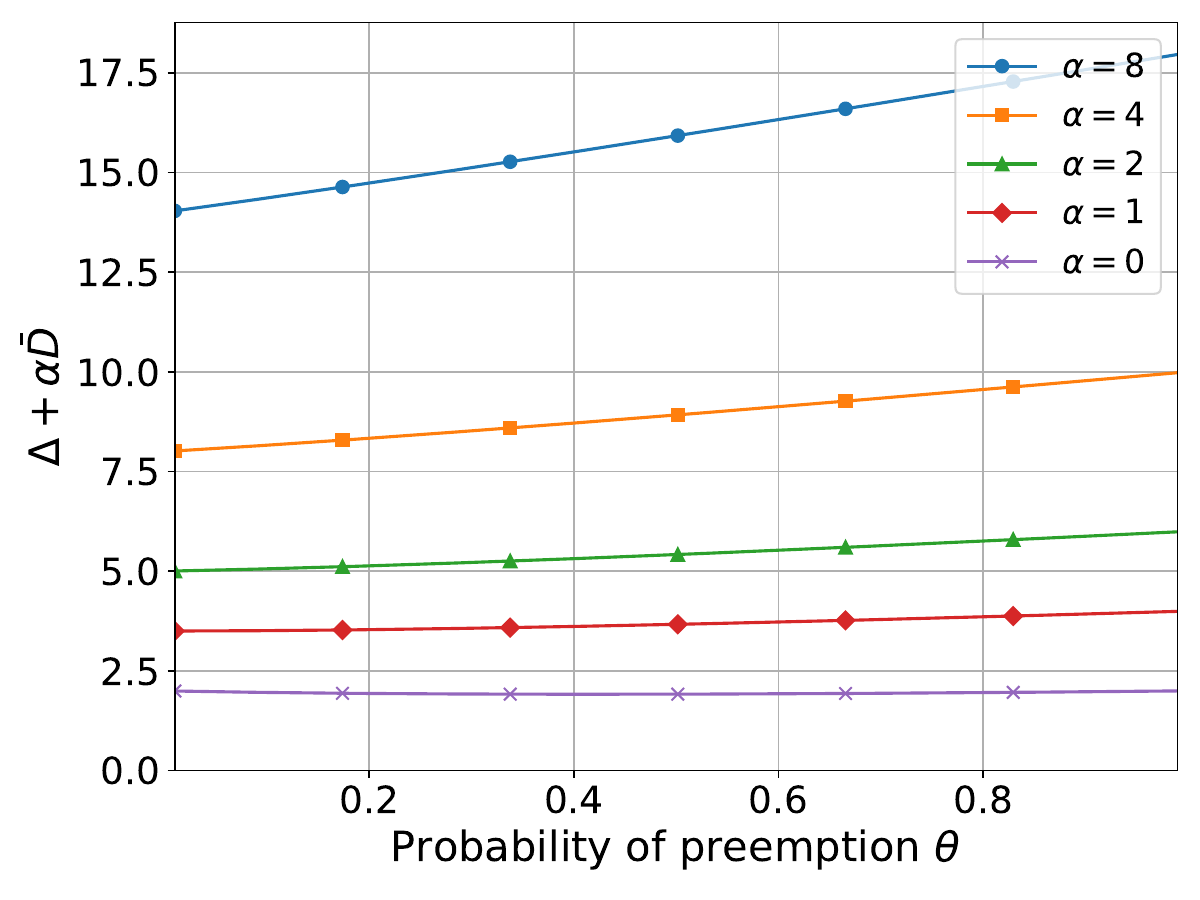}
\label{MG_Gamma_2}
}
\caption{
M/G/1/1 queueing system with probabilistic preemption and Gamma distributed service time: 
$\Delta + \alpha\Dbar$ as a function of $\theta$ for different values of $\alpha$ with $\mu=1$, and $\lambda=2$.}
\label{MG_Gamma}
\end{figure}

\begin{figure}
\centering
\subfigure[Probability of preemption $\theta=1$.]{
\includegraphics[width=0.48\textwidth]{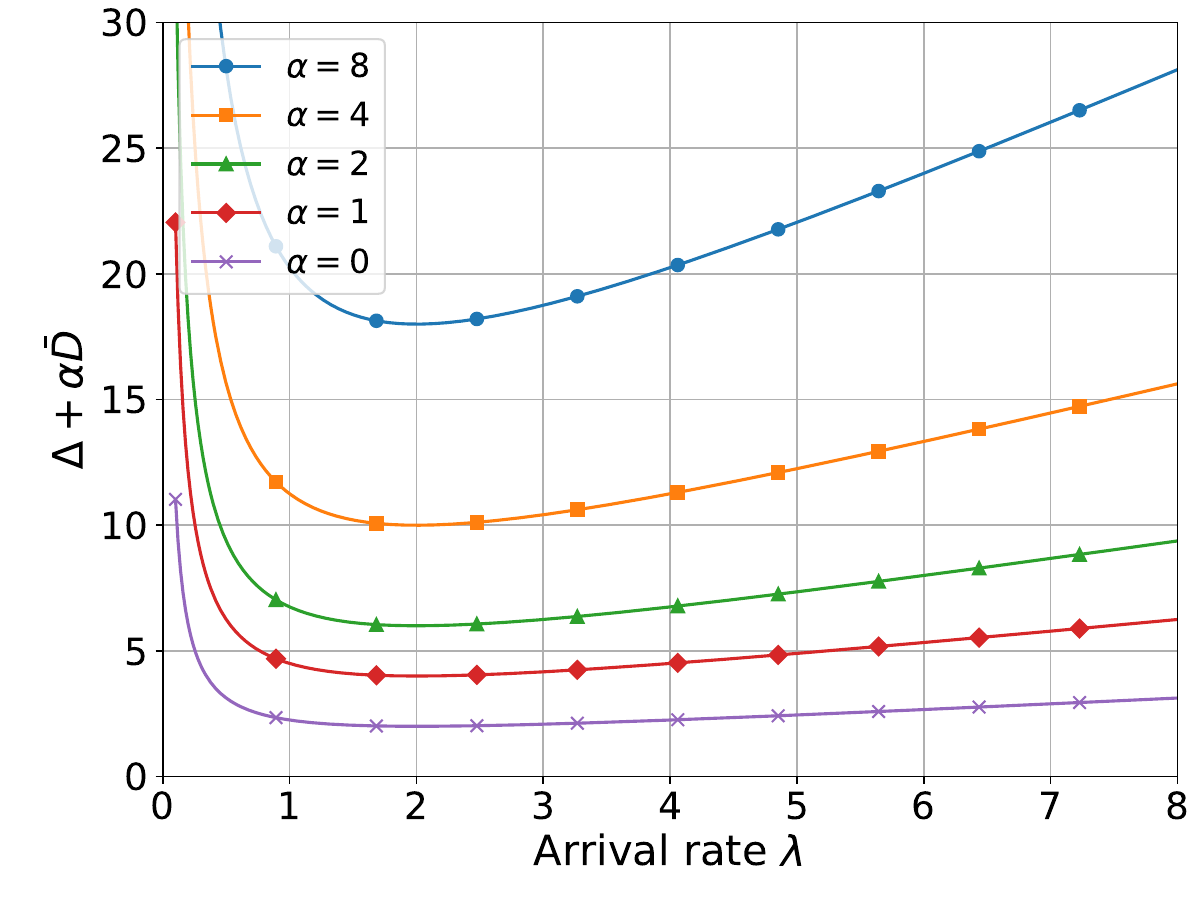}
\label{MG_Gamma_1am_1}
}
\subfigure[Probability of preemption $\theta=0.5$.]{
\includegraphics[width=0.48\textwidth]{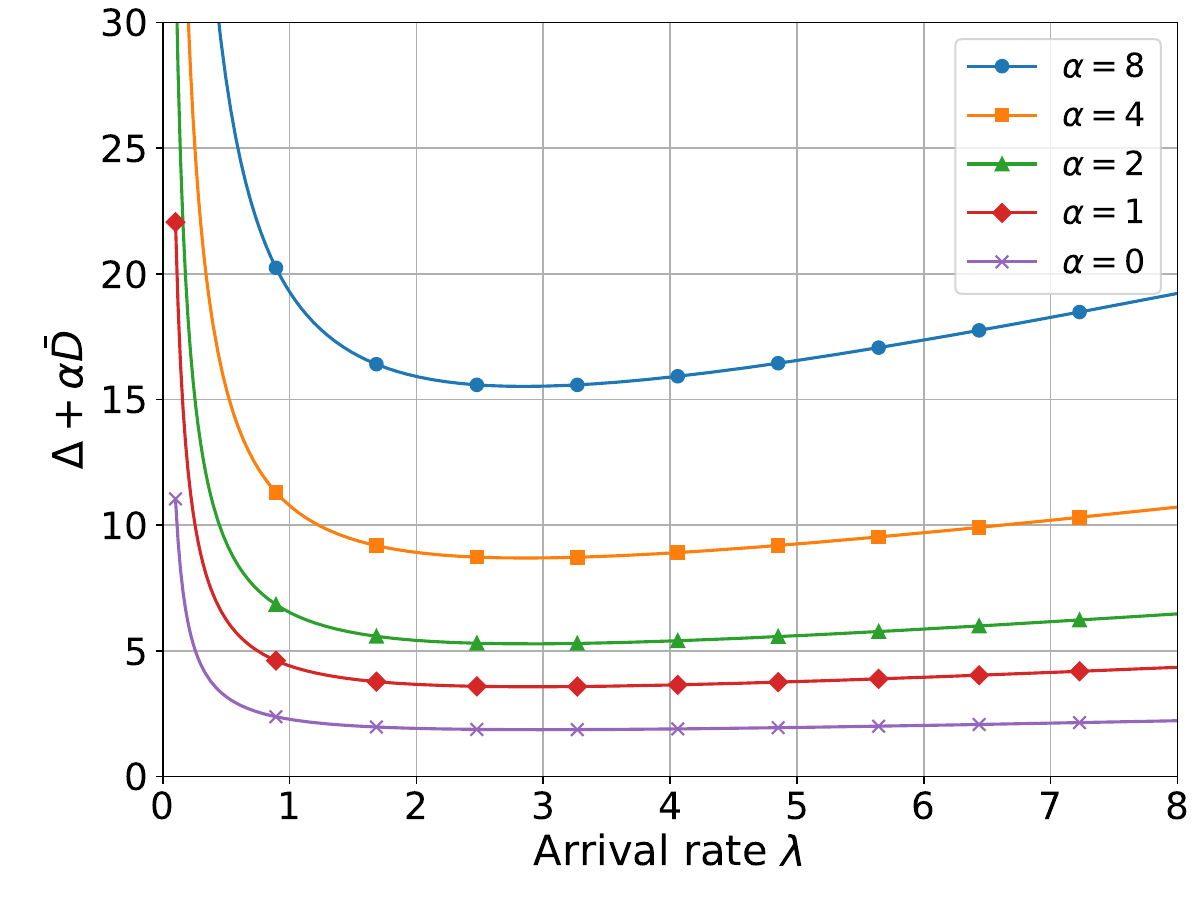}
\label{MG_Gamma_1am_p5}
}
\subfigure[Probability of preemption $\theta=0.001$.]
{
\includegraphics[width=0.48\textwidth]{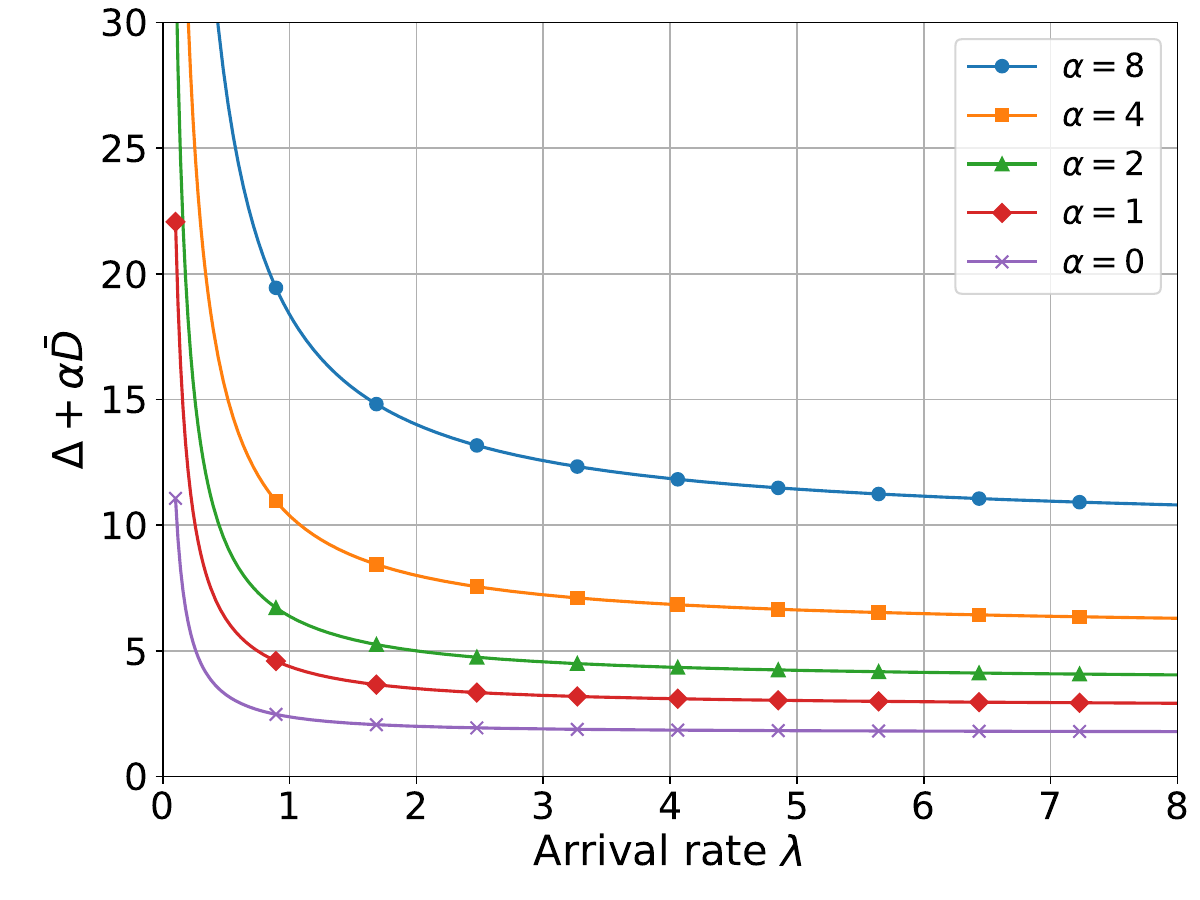}
\label{MG_Gamma_1am_pzz1}
}
\caption{
M/G/1/1 queueing system with probabilistic preemption and Gamma distributed service time:
$\Delta + \alpha\Dbar$ as a function of $\lambda$ for different values of $\alpha$ with $\mu=1$, and $(\kappa, \beta) = (2, 2)$.}
\label{MG_Gamma_1am}
\end{figure}




In this section, we evaluate the weighted sum $\Delta + \alpha \Dbar$ of the average AoI $\Delta$ and the first order age dispersion $\Dbar$ for the M/G/1/1 system under the probabilistically preemptive policy. We keep in mind that this weighted sum is the $k$-th order AoI $\Delta^{(k)}$  when $\alpha=k-1$.

We assume that the service time $S$ follows a Gamma distribution
with parameters shape $\kappa>0$ and rate $\beta>0$. With $\Gamma(\kappa)$ denoting the Gamma function at $\kappa$, the PDF of $S$ is
$f_S(t)=
[\beta^{\kappa}/\Gamma(\kappa)] t^{\kappa-1}\exp(-\beta t),~t>0.$ Note that $S$ has variance ${\kappa}/{\beta}^2
$, and the service rate is $\mu=1/\mathbb{E}[S]={\beta}/{\kappa}$.

 Figs.~\ref{MG_Gamma_p5}, \ref{MG_Gamma_1},  and \ref{MG_Gamma_2} illustrate $\Delta + \alpha\Dbar$ 
 as a function of probability of preemption $\theta$ for different values of $\alpha$ and variance of the distribution.  In these figures, the arrival rate is $\lambda=2$ and the service rate is $\mu=1$. 
 The shape and rate parameters of the Gamma distribution are $(\kappa, \beta) = (0.5, 0.5)$ in Fig.~\ref{MG_Gamma_p5}, $(\kappa, \beta) = (1, 1)$ in Fig.~\ref{MG_Gamma_1}, and $(\kappa, \beta) = (2, 2)$ in Fig.~\ref{MG_Gamma_2}.

 Figs.~\ref{MG_Gamma_1am_1}, \ref{MG_Gamma_1am_p5},  and \ref{MG_Gamma_1am_pzz1} illustrate $\Delta + \alpha\Dbar$ 
 as a function of the arrival rate $\lambda$ for different values of $\alpha$ and probability of preemption $\theta$. In these figures, the shape and rate parameters of the Gamma distribution are $(\kappa, \beta) = (2, 2)$ and the service rate is $\mu=1$. The probability of preemption is $\theta=1$ in Fig.~\ref{MG_Gamma_1am_1},  $\theta=0.5$ in Fig.~\ref{MG_Gamma_1am_p5}, and $\theta=0.001$ in Fig.~\ref{MG_Gamma_1am_pzz1}.

The hazard rate of the service time at time instant $t$ is defined as the instantaneous conditional rate of service completion, given survival of the service process up to time $t$
 \cite{barlow1963properties}. In systems with an increasing hazard rate, an ongoing service is more likely to complete soon, making preemption less advantageous; conversely, when the hazard rate is decreasing or constant, preemption can be more effective. As illustrated in Fig.~\ref{MG_Gamma}, when $\kappa \le 1$ (as in Figs.~\ref{MG_Gamma_p5} and \ref{MG_Gamma_1}), increasing the probability of preemption reduces the average $(\alpha+1)$-th order) AoI, whereas for $\kappa > 1$ (as in Fig.~\ref{MG_Gamma_2}), it increases the average $(\alpha+1)$-th order AoI. This behavior arises because, for the Gamma service time, the hazard rate is decreasing when $\kappa < 1$, constant when $\kappa = 1$, and increasing when $\kappa > 1$~\cite{barlow1963properties}.

Fig.~\ref{MG_Gamma_1am} illustrates that the optimal arrival rate for different orders of the AoI can, in general, vary and depends on the preemption policy adopted in the system. In this figure, the shape parameter is set to $\kappa=2$, for which the hazard rate increases with time. Consequently, preempting an in-service packet is not advantageous. When the probability of preemption is very low, as in Fig.~\ref{MG_Gamma_1am_pzz1}, increasing the arrival rate is beneficial since it leads to more frequent packet deliveries to the sink. However, when preemption is enforced, as in Figs.~\ref{MG_Gamma_1am_1} and \ref{MG_Gamma_1am_p5}, the system must carefully tune the arrival rate. A very high arrival rate leads to the preemption of packets that are likely to be served soon (due to the increasing hazard rate), while a very low arrival rate results in infrequent updates.

To further investigate Remark~\ref{Remark-1}, Fig.~\ref{Dp_Dnp} illustrates $\Dbar_{\text{P}}$ and 
$\Dbar_{\text{NP}}$ as a function of the arrival rate $\lambda$ for different 
values of the shape parameter $\kappa$ of the Gamma distributed service 
time, with $\mu = 1$. Since $\Dbar_{\text{NP}} = \mathbb{E}[S] + 1/\lambda$ 
depends only on the mean service time, and the mean is fixed to 
$\mathbb{E}[S] = 1/\mu$ for all $\kappa$ values, the $\Dbar_{\text{NP}}$ curves 
coincide across all cases. In contrast, $\Dbar_{\text{P}} = \frac{1}{\lambda L_S(\lambda)}$ 
depends on the full service time distribution through the Laplace transform 
$L_S(\lambda)$. As observed in the figure, when $\kappa < 1$ (decreasing 
hazard rate), $\Dbar_{\text{P}} < \Dbar_{\text{NP}}$, indicating that preemption reduces 
age dispersion. When $\kappa = 1$ (exponential service time, constant hazard 
rate), $\Dbar_{\text{P}} =\Dbar_{\text{NP}}$, as the Laplace transform yields 
$L_S(\lambda) = \frac{\mu}{\mu + \lambda}$, making the two expressions 
identical. Finally, when $\kappa > 1$ (increasing hazard rate), 
$\Dbar_{\text{P}} > \Dbar_{\text{NP}}$, since preempting a packet that is likely to 
complete service soon is wasteful and increases dispersion. This behavior 
is consistent with the hazard rate discussion above, where preemption is 
beneficial when $\kappa \leq 1$ and detrimental when $\kappa > 1$. 
Furthermore, $\Dbar_{\text{NP}}$ is decreasing in $\lambda$ for all $\kappa$, 
whereas $\Dbar_{\text{P}}$ is decreasing in $\lambda$ when $\kappa \leq 1$ but 
increasing when $\kappa > 1$, which is consistent with the behavior observed in Fig.~\ref{MG_Gamma_1am}.

\begin{figure}
\centering
\includegraphics[width=.95\linewidth,trim = 0mm 0mm 0mm 0mm,clip]{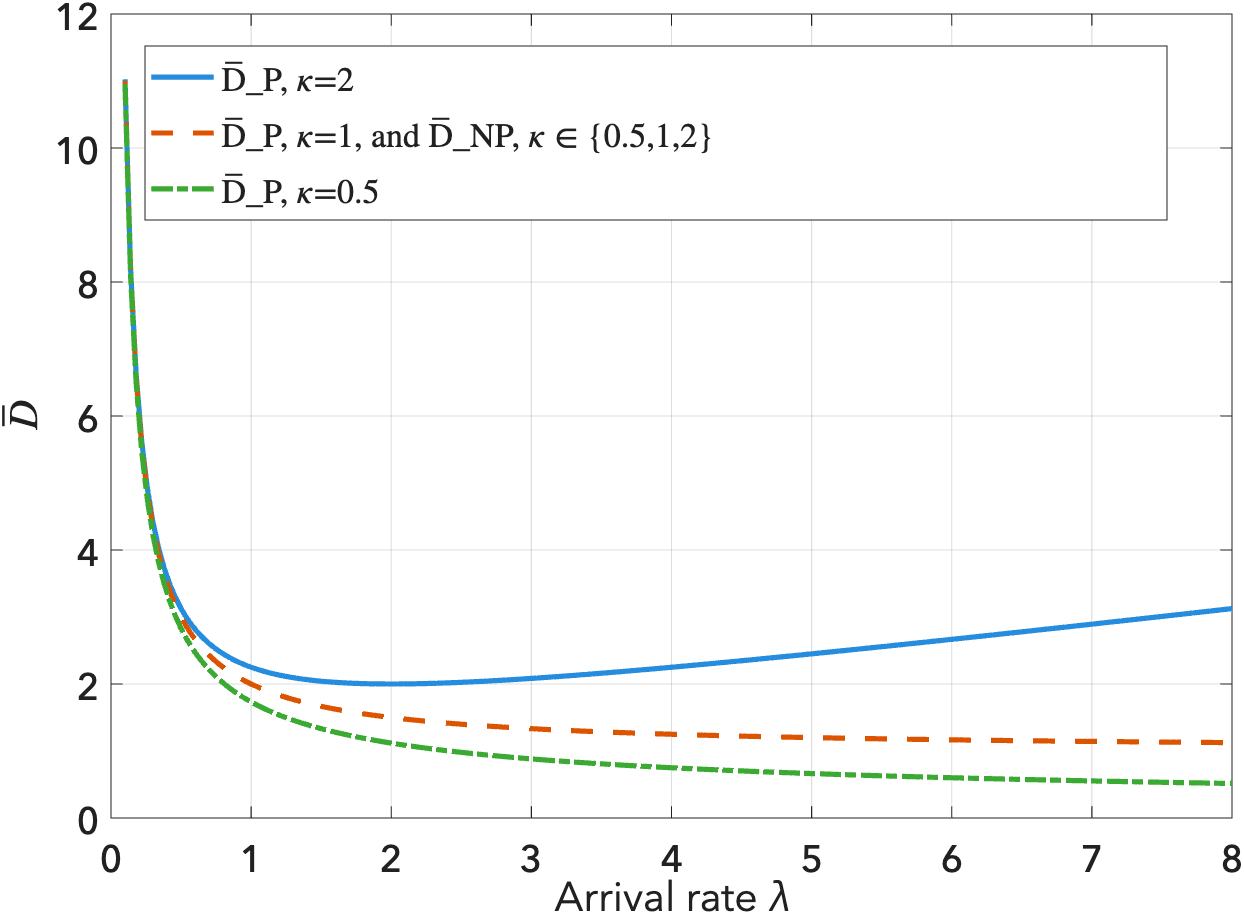}
\caption{Average age dispersion under preemptive ($\Dbar_{\text{P}}$) and non-preemptive 
($\Dbar_{\text{NP}}$) policies as a function of the arrival rate $\lambda$ for Gamma 
distributed service time with different shape parameters $\kappa$ 
and $\mu = 1$.}
\label{Dp_Dnp}
\end{figure}

\section{Conclusions}
We measured temporal consistency in status update systems using the age dispersion. The $k$-th order age dispersion was defined as the difference between the age of the most recently received update and that of the $(k+1)$-th most recently received update. The $k$-th order age dispersion can be used to characterize the $(k+1)$-th order AoI, defined as the age of the $(k+1)$-th most recently received update. We showed that a low value of higher-order AoI implies both fresh and temporally consistent data delivery.

Several interesting directions remain open for future work. While this paper focuses on the M/G/1/1 queueing system, characterizing age dispersion and higher-order AoI in other queueing models, such as M/M/1, M/G/1/2, or multi-source systems, is a natural next step. Another promising direction is the analysis of age dispersion under state-selective preemption policies, where preemption decisions depend on the current age or system state, rather than a fixed probability $\theta$. Finally, extending the framework to generate-at-will systems, where the source can control the timing of update generation, opens the possibility of jointly optimizing age dispersion and AoI through intelligent scheduling.



\bibliographystyle{IEEEtran}
\bibliography{conf_short,jour_short,Bibliography}

\end{document}